\newcommand{\doublewidetilde}[1]{{%
		\mathpalette\double@widetilde{#1}}}
\newcommand{\double@widetilde}[2]{%
		\sbox\z@{$\m@th#1\widetilde{#2}$}%
		\ht\z@=.5\ht\z@
		\widetilde{\box\z@}}
\newtheorem{theorem}{Theorem}
\newtheorem{lemma}{Lemma}
\newtheorem{remark}{Remark}
\begin{document}
%
\title{\huge Power Allocation for Space-Terrestrial Cooperation Systems with Statistical CSI \vspace*{-0.4cm}}

\author{ \IEEEauthorblockN{ Trinh~Van~Chien$^{\nu}$, Eva~Lagunas$^{\ast}$, Tiep~M.~Hoang$^{\dagger}$,  Symeon~Chatzinotas$^{\ast}$,  Bj\"{o}rn~Ottersten$^{\ast}$, and Lajos~Hanzo$^{\xi}$} 
 \IEEEauthorblockA{$^{\nu}$School of Information and Communication Technology (SoICT), Hanoi University of Science and Technology, Vietnam  \\
 	$^{\ast}$Interdisciplinary Centre for Security, Reliability and Trust (SnT), University of Luxembourg, Luxembourg\\
 	$^{\dagger}$Department of Electrical Engineering,  University
 	of Colorado Denver, Denver, United States\\
 	$^{\xi}$School of Electronics and Computer Science, University of Southampton, Southampton, United Kingdom
 }
\vspace*{-1cm}
\thanks{This work has been supported by the Luxembourg National Research Fund (FNR) under the project MegaLEO (C20/IS/14767486).}
}

\maketitle

\begin{abstract}
This paper studies an integrated network design that boosts system capacity through cooperation between wireless access points (APs) and a satellite. By coherently combing the signals received by the central processing unit from the users through the space and terrestrial links, we mathematically derive an achievable throughput expression for the uplink (UL) data transmission over spatially correlated Rician channels. A closed-form expression is obtained when maximum ratio combining is employed to detect the desired signals. We formulate the max-min fairness and total transmit power optimization problems relying on the channel statistics to perform power allocation. The solution of each optimization problem is derived in form of a low-complexity iterative design, in which each data power variable is updated based on a  closed-form expression. The mathematical analysis is validated with numerical results showing the added benefits of considering a satellite link  in terms of improving the ergodic data throughput. 
\vspace*{-0.1cm}
\end{abstract}

%
\IEEEpeerreviewmaketitle

\section{Introduction}

Future wireless systems will offer high throughput per user principally based on the access to new spectrum, while intelligently coordinating a number of access points (APs) in a coverage area. This leads to the concept of Cell-Free Massive MIMO systems \cite{ngo2017cell}, which serve a group of users by a group of APs. 
The network then coherently combines different observations of the transmitted waves received over multiple heterogeneous propagation paths \cite{9500472}  using either maximum ratio combing (MRC) or minimum mean square error (MMSE) reception. Satellite communication has attracted renewed interest as a promising technique to provide services for many users across a large coverage area \cite{schwarz2019mimo}. The low latency, small size, and short delays of non-geostationary (NGSO) satellites has demonstrated their potentialities.  

Both academia and industry have recently intensified their research of NGSO aided terrestrial communications \cite{3gpp2019study}. As for the demands of tomorrow's networks, the authors of \cite{abdelsadek2021future} considered the performance of a space communication system that replaces terrestrial APs by LEO satellites. Despite integrating a LEO satellite into a terrestrial network \cite{riera2021enhancing}, the received signals were detected independently i.e., without exploiting the benefits of constructive received signal combination. As a further contribution, the coexistence of fixed satellite services and cellular networks was studied in \cite {du2018secure} for transmission over slow fading channels subject to individual user throughput constraints. The ergodic rate of the fast fading channels was considered in \cite{ruan2019energy} under the assumption of perfect channel state information (CSI) and no spatial correlation. In a nutshell, the literature of  space-terrestrial integrated networks suffers from the two limitations: $i)$ most of the performance analysis and resource allocation studies rely on the idealized simplifying assumptions of perfect instantaneous CSI knowledge, which is challenging to acquire in practice, especially under high mobility scenarios; and $ii)$ the spatial correlation between satellite antennas is ignored, despite its coherent existence in the planar antenna arrays.

\textit{Paper contributions}: By taking advantage of both the distributed Cell-Free Massive MIMO structure and satellite communications, we evaluate the ergodic throughput of each user relying on a limited number of APs and demonstrate how the satellite enhances the system performance. Explicitly, our main contributions are summarized as follows:
 $i)$ We derive the achievable rate expression of each user in the uplink (UL) for transmission over spatially correlated fading channels, when relying on centralized data processing. If the MRC technique is used  both at the APs and at the satellite gateway, a closed-form expression of the ergodic net throughput will be derived. $ii)$ Furthermore, we formulate a max-min fairness optimization problem that simultaneously allocates the  powers to all the scheduled users and guarantees uniform throughput for the entire network. We determine the user-specific optimal power for each user at a low complexity by exploiting the quasi-concavity of the objective function, the standard interference functions, and the bisection method. $iii)$ Our numerical results quantify the value of the satellite in improving both the total and the minimum user throughput.

\textit{Notation}: Upper and lower bold letters denote matrices and vectors, respectively. The superscript $(\cdot)^T$ and $(\cdot)^H$ are the regular and Hermitian transpose. $\mathrm{tr}(\mathbf{X})$ is the trace of square matrix $\mathbf{X}$, whilst $\mathbf{I}_N$ is an identity matrix of size $N \times N$. The circularly symmetric Gaussian distribution is denoted by $\mathcal{CN}(\cdot, \cdot)$, and the expectation of a random variable is $\mathbb{E}\{\cdot\}$. Moreover, $\mod(\cdot, \cdot)$ is the modulus operation, $\lfloor \cdot \rfloor$ is the floor function, and $\otimes$ is the Kronecker product. The cardinality of set $\mathcal{X}$ is denoted by $|\mathcal{X}|$. Finally, $J_1(\cdot)$ is the Bessel function of the first kind of order one.
\vspace{-0.2cm}
\section{System Model} \label{Sec:SysModel}
\vspace{-0.1cm}
We consider a  distributed multi-user network comprising $M$ APs each equipped with a single receiver antenna (RA). The APs  cooperatively serve $K$ users in the UL, all equipped with a single transmit antenna (TA). The system performance is enhanced by the assistance of an NGSO satellite having $N$ RAs arranged in an $N_H \times N_V$-element rectangular array ($N =N_H \times N_V$), as illustrated in Fig.~\ref{FigSysModel}. Both the satellite gateway and the APs forward the UL signals received from the users to a central processing unit (CPU) by fronthaul links. The APs rely on optical fronthaul links, while the satellite has a radio downlink (feeder link) to the ground station, which forwards the users' UL signal to the CPU. Since the dispersive channels fluctuate both time and frequency over wideband systems, orthogonal frequency division multiplexing (OFDM) is used for mitigating it \cite{you2020massive}. A block-fading channel model is applied across the OFDM symbols, where the fading envelope is assumed to be frequency-flat through an entire OFDM symbol and then faded randomly for the next OFDM symbol. We assume that a fraction of $K$ subcarriers of each  OFDM symbol are known pilots, while the remaining $\tau_c - K$ subcarriers are used for UL payload data transmission. The satellite antenna's gain is sufficiently high to amplify the weak UL signals received from the distant terrestrial users \cite{3gpp2019study}. 
\begin{figure}[t]
	\centering
	\includegraphics[trim=2.6cm 2.6cm 2.2cm 1.7cm, clip=true, width=2.8in]{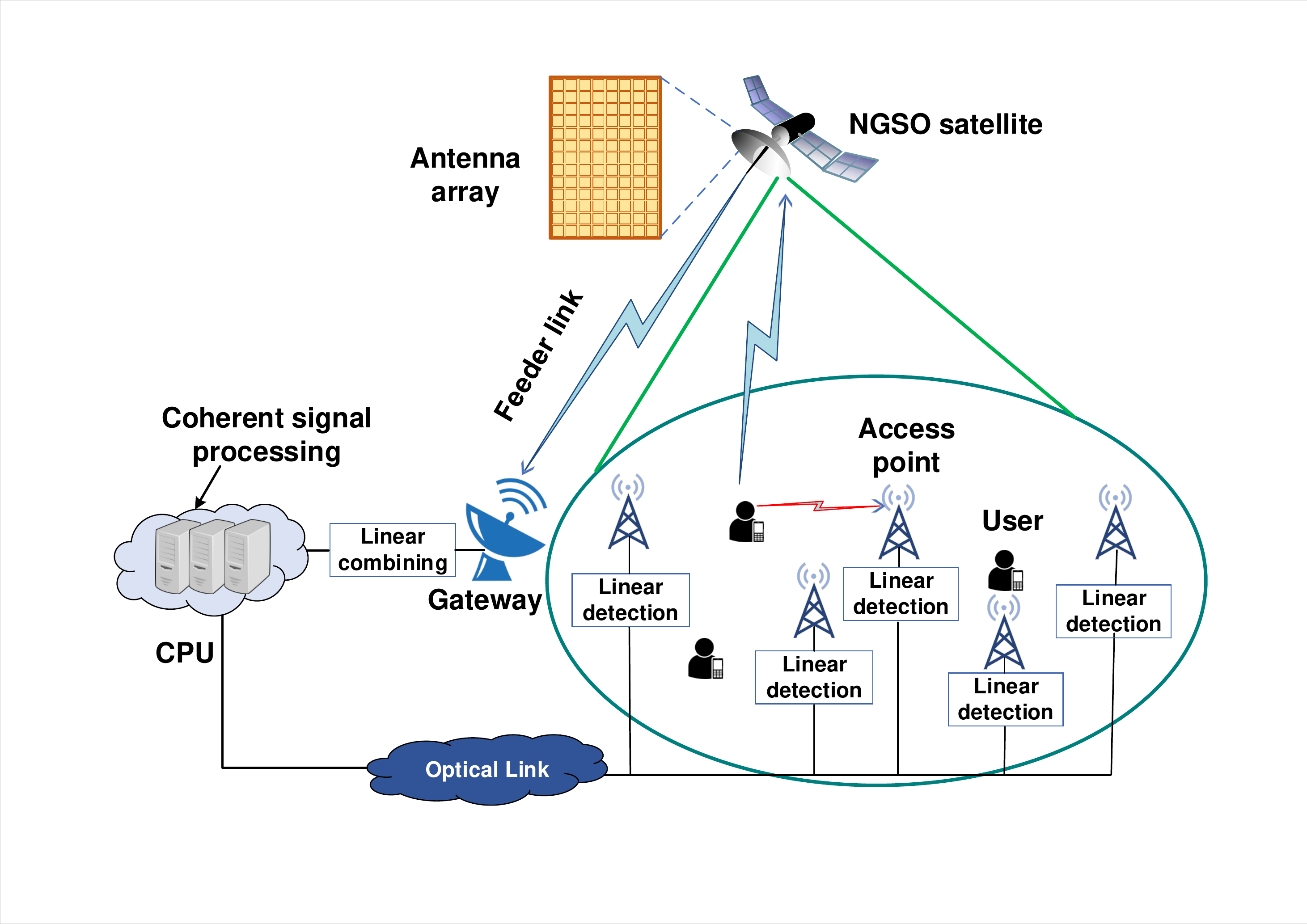} \vspace*{-0.2cm}
	\caption{Illustration of a cooperative satellite-terrestrial  wireless network.}
	\label{FigSysModel}
	\vspace*{-0.5cm}
\end{figure}
\vspace{-0.2cm}
\subsection{Channel Model} \label{Sec:ChannelModel}
\vspace{-0.1cm}
The terrestrial UL channel between AP~$m$ and user~$k$, $\forall m,k,$ denoted by $g_{mk} \in \mathbb{C}$ is modeled as
$g_{mk}  \sim \mathcal{CN}(0, \beta_{mk})$,
where $\beta_{mk}$ is the large-scale fading coefficient. The channel between the UL transmitter of user~$k$ and the satellite receiver, denoted by $\mathbf{g}_k \in \mathbb{C}^N$, has been modeled according to the 3GPP recommendation (Release 15) \cite{3gpp2019study} and obeys the Rician distribution as
$\mathbf{g}_k \sim \mathcal{CN}(\bar{\mathbf{g}}_k, \mathbf{R}_k)$,
where $\bar{\mathbf{g}}_k \in \mathbb{C}^N$ denotes the LoS components gleaned from the $N$ RAs in the UL. The matrix $\mathbf{R}_k \in \mathbb{C}^{N \times N}$ is the covariance matrix of the spatially correlated signals collected by the RAs of the satellite attenuated by the propagation loss.\footnote{The propagation loss in the carrier frequency range from $0.5$~GHz to $100$~GHz has been well documented in \cite{3gpp2019study}.} The LoS component is given by 
\vspace*{-0.2cm}
\begin{equation} \label{eq:bargk}
\bar{\mathbf{g}}_k = \sqrt{\kappa_k \beta_k/(\kappa_k+1)} \big[e^{j \pmb{\ell}(\theta_k, \omega_k)^T \mathbf{c}_1}, \ldots,  e^{j \pmb{\ell}(\theta_k, \omega_k)^T \mathbf{c}_N}  \big]^T,
\vspace*{-0.2cm}
\end{equation}
where $\theta_k$ and $\omega_k$ are the elevation and azimuth angle, respectively; $\kappa_k \geq 0$ represents the Rician factor;  and $\beta_k$ is the large-scale fading coefficient encountered between user~$k$ and the satellite. The antenna array is fabricated in a rectangular surface whose wave form vector $\pmb{\ell}(\theta_k, \omega_k)$ \cite{massivemimobook,Chien2021TWC} is
\vspace*{-0.2cm}
\begin{equation}
\pmb{\ell}(\theta_k, \omega_k) = \frac{2 \pi}{\lambda}[ \cos(\theta_k)\cos(\omega_k), \sin(\theta_k)\cos(\omega_k), \sin(\theta_k) ]^T,
\vspace*{-0.2cm}
\end{equation}
with $\lambda$ being the wavelength of the carrier. In \eqref{eq:bargk}, there are $N$ indexing vectors, each given by
$\mathbf{c}_m = [0, \mathrm{mod}(n-1, N_H) d_H, \lfloor (n-1)/N_H \rfloor d_V ]^T$,
where $d_H$ and $d_V$ represent the antenna spacing in the horizontal and vertical direction, respectively. The 3D channel model of \cite{ying2014kronecker,chatzinotas2009multicell} relies on the spatial correlation matrices  of a planar antenna array, formulated as
$\mathbf{R}_k =  (\beta_k/(\kappa_k +1)) \mathbf{R}_{k,H} \otimes \mathbf{R}_{k,V}$, 
where $\mathbf{R}_{k,H} \in \mathbb{C}^{N_H \times N_H}$ and $\mathbf{R}_{k,V} \in \mathbb{C}^{N_V \times N_V}$ are the spatial correlation matrices along the horizontal and vertical direction.

\subsection{Uplink Pilot Training} \label{Sec:ULP}
\vspace{-0.2cm}
 All the $K$ users simultaneously transmit their pilot signals in each coherence block of the UL. We assume to have the same number of orthonormal pilots as users, i.e., we have the set $\{ \pmb{\phi}_1, \ldots, \pmb{\phi}_K\}$, where the pilot $\pmb{\phi}_k \in \mathbb{C}^K$ is assigned to user~$k$. 
The training signal received at AP~$m$, $\mathbf{y}_{pm} \in \mathbb{C}^K$, is a superposition of all the UL pilot signals transmitted over the propagation environment, which is formulated as
\vspace*{-0.2cm}
\begin{equation} \label{eq:ypm}
	\mathbf{y}_{pm}^H = \sum\nolimits_{k=1}^K \sqrt{pK} g_{mk} \pmb{\phi}_k^H + \mathbf{w}_{pm}^H,
\vspace*{-0.2cm}
\end{equation}
where $p$ is the transmit power allocated to each pilot symbol and $\mathbf{w}_{pm} \sim \mathcal{CN}(\mathbf{0}, \sigma_a^2 \mathbf{I}_{\tau_p})$ is the additive white Gaussian noise (AWGN) at AP~$m$ having zero mean and standard derivation of $\sigma_a$~[dB]. Furthermore, the training signal received at the satellite gateway of Fig.~\ref{FigSysModel} used for estimating the satellite UL channel is formulated as
\vspace*{-0.2cm}
\begin{equation} \label{eq:Yp}
\mathbf{Y}_p = \sum\nolimits_{k=1}^K \sqrt{pK} \mathbf{g}_{k} \pmb{\phi}_k^H + \mathbf{W}_{p},
\vspace*{-0.2cm}
\end{equation}
where $\mathbf{W}_{p} \in \mathbb{C}^{N \times K}$ is the AWGN with each element distributed as $\mathcal{CN}(0, \sigma_s^2)$. The desired UL channels are estimated at the APs and the satellite gateway by the MMSE estimation.
\vspace*{-0.2cm}
\begin{lemma} \label{Lemma:Est}
The MMSE estimate $\hat{g}_{mk}$ of the UL channel $g_{mk}$ between user~$k$ and AP~$m$ can be computed from \eqref{eq:ypm} as
\vspace*{-0.2cm}
\begin{equation} \label{eq:ChanEstgmk}
\hat{g}_{mk} = \mathbb{E}\{ g_{mk} | \mathbf{y}_{pm}^H\pmb{\phi}_k \} = \sqrt{pK} \beta_{mk}\mathbf{y}_{pm}^H\pmb{\phi}_k /(p K  \beta_{mk} + \sigma_a^2),
\vspace*{-0.2cm}
\end{equation}
which is distributed as $\hat{g}_{mk} \sim \mathcal{CN}(0, \gamma_{mk})$ and its variance is $\gamma_{mk} = \mathbb{E} \{ |\hat{g}_{mk}|^2\} =   pK \beta_{mk}^2 / ( p K  \beta_{mk} + \sigma_s^2)$. The channel estimation error $e_{mk} = g_{mk} - \hat{g}_{mk}$ is distributed as $e_{mk} \sim \mathcal{CN}(0, \beta_{mk} - \gamma_{mk} )$. The MMSE estimate $\hat{\mathbf{g}}_k$ of the channel $\mathbf{g}_k$  can be formulated based on \eqref{eq:Yp} as
\vspace*{-0.2cm}
\begin{equation} \label{eq:ChannelEstgk}
\hat{\mathbf{g}}_k = \bar{\mathbf{g}}_k +  \sqrt{pK} \mathbf{R}_k \pmb{\Phi}_k \big( \mathbf{Y}_{p} \pmb{\phi}_k -   \sqrt{pK} \bar{\mathbf{g}}_k \big),
\vspace*{-0.2cm}
\end{equation}
where we have $\pmb{\Phi}_k = \big( p K \mathbf{R}_{k} + \sigma_s^2 \mathbf{I}_N \big)^{-1}$. Additionally, the channel estimation error $\mathbf{e}_k = \mathbf{g}_k - \hat{\mathbf{g}}_k$ and the channel estimate $\hat{\mathbf{g}}_k$ are independent random variables, which are distributed as $\hat{\mathbf{g}}_k \sim \mathcal{CN}(\bar{\mathbf{g}}_k,  p K \mathbf{R}_k \pmb{\Phi}_k \mathbf{R}_k), \mathbf{e}_k \sim \mathcal{CN}(\mathbf{0}, \mathbf{R}_k - p K \pmb{\Theta}_k),$ with $\pmb{\Theta}_k = \mathbf{R}_k \pmb{\Phi}_k \mathbf{R}_k, \forall k$.
\vspace*{-0.2cm}
\end{lemma}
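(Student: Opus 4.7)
The plan is to derive both MMSE estimators by exploiting pilot orthonormality to reduce the multi-user pilot observations to a scalar (respectively vector) sufficient statistic per user, and then applying the Bayesian MMSE formula for jointly circularly symmetric complex Gaussian variables.

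For the AP-side channel, I would right-multiply $\mathbf{y}_{pm}^H$ in \eqref{eq:ypm} by $\pmb{\phi}_k$. Because the pilots are orthonormal, all interfering pilot terms vanish and the scalar observation becomes $\mathbf{y}_{pm}^H \pmb{\phi}_k = \sqrt{pK}\, g_{mk} + \mathbf{w}_{pm}^H \pmb{\phi}_k$, whose noise component is $\mathcal{CN}(0, \sigma_a^2)$. Since $g_{mk}$ and the projected noise are independent zero-mean complex Gaussians, the conditional expectation in \eqref{eq:ChanEstgmk} is affine-linear: $\hat{g}_{mk}$ equals the cross-covariance $\sqrt{pK}\beta_{mk}$ divided by the observation variance $pK\beta_{mk} + \sigma_a^2$, multiplied by $\mathbf{y}_{pm}^H\pmb{\phi}_k$. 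The variance $\gamma_{mk}$ follows by squaring this coefficient and using $\mathbb{E}\{|\mathbf{y}_{pm}^H\pmb{\phi}_k|^2\} = pK\beta_{mk} + \sigma_a^2$, and the claim $e_{mk} \sim \mathcal{CN}(0, \beta_{mk} - \gamma_{mk})$ is an immediate consequence of the orthogonality principle applied to jointly Gaussian variables.

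For the satellite channel I would repeat this strategy in matrix form by post-multiplying $\mathbf{Y}_p$ in \eqref{eq:Yp} by $\pmb{\phi}_k$, producing the $N$-dimensional sufficient statistic $\mathbf{Y}_p \pmb{\phi}_k = \sqrt{pK}\, \mathbf{g}_k + \mathbf{W}_p \pmb{\phi}_k$ whose noise term is $\mathcal{CN}(\mathbf{0}, \sigma_s^2 \mathbf{I}_N)$. The one new ingredient compared with the terrestrial case is the non-zero LoS mean $\bar{\mathbf{g}}_k$, which I would handle by mean-centering: define $\mathbf{z}_k = \mathbf{Y}_p \pmb{\phi}_k - \sqrt{pK}\,\bar{\mathbf{g}}_k = \sqrt{pK}(\mathbf{g}_k - \bar{\mathbf{g}}_k) + \mathbf{W}_p \pmb{\phi}_k$. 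Now $\mathbf{g}_k - \bar{\mathbf{g}}_k$ is zero-mean with covariance $\mathbf{R}_k$ and independent of the noise, so the standard multivariate linear MMSE formula yields $\mathbb{E}\{\mathbf{g}_k - \bar{\mathbf{g}}_k \mid \mathbf{z}_k\} = \sqrt{pK}\,\mathbf{R}_k \bigl(pK\mathbf{R}_k + \sigma_s^2 \mathbf{I}_N\bigr)^{-1} \mathbf{z}_k$; adding $\bar{\mathbf{g}}_k$ back and identifying $\pmb{\Phi}_k$ reproduces \eqref{eq:ChannelEstgk} exactly.

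The Gaussianity of $\hat{\mathbf{g}}_k$ and $\mathbf{e}_k$ is then immediate since each is an affine function of jointly Gaussian vectors; their covariances $pK\,\mathbf{R}_k \pmb{\Phi}_k \mathbf{R}_k = pK\,\pmb{\Theta}_k$ and $\mathbf{R}_k - pK\,\pmb{\Theta}_k$ follow by direct computation and the orthogonality principle, while independence between estimate and error reduces to verifying a zero cross-covariance under joint Gaussianity. No step poses a genuine obstacle; the only items requiring care are the pilot-projection step and the mean subtraction in the Rician case. Importantly, the Kronecker decomposition $\mathbf{R}_k = (\beta_k/(\kappa_k+1))\,\mathbf{R}_{k,H}\otimes \mathbf{R}_{k,V}$ need not be expanded at any point, since every manipulation treats $\mathbf{R}_k$ as a single object.
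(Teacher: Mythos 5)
Your proposal is correct and follows essentially the same route as the paper, whose proof simply invokes the standard MMSE estimation machinery of the cited textbook: pilot de-spreading via orthonormality to obtain a per-user sufficient statistic, the linear/Bayesian MMSE formula for jointly Gaussian variables, and mean-centering to handle the Rician LoS component. All of your intermediate computations (the projected noise variances $\sigma_a^2$ and $\sigma_s^2 \mathbf{I}_N$, the coefficient $\sqrt{pK}\beta_{mk}/(pK\beta_{mk}+\sigma_a^2)$, and the covariances $pK\,\pmb{\Theta}_k$ and $\mathbf{R}_k - pK\,\pmb{\Theta}_k$) check out against the stated distributions.
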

\begin{proof}
The proof follows from adopting the standard MMSE estimation of \cite{massivemimobook} for our system and channel model.
\vspace*{-0.2cm}
\end{proof}
Given the independence of the channel estimates and estimation errors, this may be  conveniently exploited in our ergodic data throughput analysis and optimization in the next sections. The LoS components of the satellite links can be estimated very accurately at the satellite gateway from its received training signals.
\vspace{-0.1cm} 
\section{Uplink Data Transmission and Ergodic Throughput Analysis} \label{Sec:ULData}
\vspace{-0.1cm}
This section presents the UL data transmission, where all users send  their signals both to the APs and to the satellite. The UL throughput of each user is derived in a closed-form expression for an MRC receiver.
\vspace{-0.1cm}
\subsection{Uplink Data Transmission} \label{Sec:UDT}
\vspace{-0.1cm}
All the $K$ users transmit their data both to the $M$ APs and to the satellite, where the symbol $s_k$  of user~$k$ obeys  $\mathbb{E}\{ |s_k|^2\} =1$ and has a transmit power level $\rho_k >0$. The  signal received at the satellite, denoted by $\mathbf{y} \in \mathbb{C}^N$, and AP~$m$, denoted by $y_m \in \mathbb{C}$, are
\vspace*{-0.2cm}
\begin{equation} \label{eq:ReceiveSigAP}
\begin{cases}
\mathbf{y} &= \sum\nolimits_{k=1}^K \sqrt{\rho_k} \mathbf{g}_k s_k  + \mathbf{w}, \\
y_m &=  \sum\nolimits_{k=1}^K \sqrt{\rho_k} g_{mk} s_k  + w_m, 
 \end{cases}
\vspace*{-0.2cm}
\end{equation}
where $\mathbf{w} \sim \mathcal{CN}(\mathbf{0}, \sigma_s^2 \mathbf{I}_N)$ and $w_m \sim \mathcal{CN}(0,\sigma_a^2)$  represent the AWGN noise at the satellite receiver and at AP~$m$, respectively. By exploiting \eqref{eq:ReceiveSigAP}, the system detects the  signal received from user~$k$ at the CPU  from the following expression
\vspace*{-0.2cm}
\begin{equation} \label{eq:2ndsk}
\hat{s}_k  = \mathbf{u}_k^H \mathbf{y} + u_{mk} y_m,
\vspace*{-0.2cm}
\end{equation}
where $\mathbf{u}_k \in \mathbb{C}^{N}$ is the linear detection vector used for inferring the desired signal arriving from the satellite and $u_{mk} \in \mathbb{C}$ is the detection coefficient used by AP~$m$. The received symbol estimate in \eqref{eq:2ndsk} combines all the different propagation paths, which explicitly unveils the potential benefits of integrating a satellite into terrestrial networks for improving the reliability and/or the throughput. Upon considering only one of the right-hand side terms of \eqref{eq:2ndsk}, the received signal becomes that of a conventional satellite network \cite{abdelsadek2021future} or a terrestrial cooperative network \cite{ngo2017cell}. Thus, we are considering an  advanced cooperative wireless network relying on the coexistence of both space and terrestrial links.
\vspace{-0.3cm}
\subsection{Uplink Throughput}
\vspace{-0.2cm}
We emphasize that if the number of APs and antennas at the satellite is sufficiently high to treat the channel gain of the desired signal in \eqref{eq:2ndsk} as a deterministic value, the  throughput of user~$k$ can be analyzed conveniently. In order to carry out the throughput analysis, let us first introduce the new variable
\vspace*{-0.2cm}
\begin{equation} \label{eq:zkkprime}
z_{kk'} = \mathbf{u}_k^H \mathbf{g}_{k'}  +  \sum\nolimits_{m=1}^M  u_{mk}^\ast  g_{mk'},
\vspace*{-0.2cm}
\end{equation}
which we term as the overall channel coefficient after the use of signal detection techniques, including both the satellite and terrestrial effects. The overall channel coefficient in \eqref{eq:zkkprime} leads to a coherent received signal combination at the CPU, where we assume perfectly phase-coherent symbol-synchronization at all the receivers and neglect any phase-jitter.\footnote{The longer propagation delay of the concatenated user-satellite-ground-station path has to be compensated by appropriately delaying the terrestrial signal for coherent combination at the CPU \cite{you2020massive}. For a LEO satellite at 600~km altitude this only imposes $4$~ms two-hop delay.} In particular, the desired signal in \eqref{eq:2ndsk} becomes
\vspace*{-0.2cm}
\begin{equation} \label{eq:Decodesig}
	\begin{split}
		& \hat{s}_k  = \sqrt{\rho_k} \mathbb{E} \{ z_{kk} \} s_k +  \sqrt{\rho_k} \left( z_{kk}  -  \mathbb{E} \{ z_{kk} \}  \right)s_k +  \\
		& \sum\nolimits_{k'=1, k'\neq k}^K \sqrt{\rho_{k'}} z_{kk'} s_{k'} + \alpha_k^\ast \mathbf{u}_k^H \mathbf{w} + \sum\nolimits_{m=1}^M u_{mk}^\ast w_m,
	\end{split}
\vspace*{-0.2cm}
\end{equation}
where the first additive term contains the desired signal associated with a deterministic effective channel gain. The second term represents the  beamforming uncertainty, demonstrating the randomness of the effective channel gain for a given signal detection technique. The remaining terms are the mutual interference and noise. By virtue of the  use-and-then-forget channel capacity bounding technique of \cite{massivemimobook}, the  ergodic  throughput of user~$k$ is
\vspace*{-0.2cm}
\begin{equation} \label{eq:Rkv1}
	R_k = \left( 1 - K/\tau_c \right) B \log_2 ( 1 + \mathrm{SINR}_k ), \mbox{[Mbps]},
\vspace*{-0.2cm}
\end{equation}
where $B$ [MHz] is the system bandwidth. The effective signal-to-interference-and-noise ratio (SINR) expression, denoted by $\mathrm{SINR}_k$, is given as in \eqref{eq:SINRk}.
\begin{figure*}
\vspace*{-0.2cm}
\begin{equation} \label{eq:SINRk}
\mathrm{SINR}_k = \frac{\rho_k | \mathbb{E}\{ z_{kk}\}|^2 }{\sum\nolimits_{k'=1}^K \rho_{k'}  \mathbb{E}\{ |z_{kk'}|^2 \} - \rho_k \big| \mathbb{E}\{ z_{kk}\} \big|^2 +  \mathbb{E} \big\{ \big| \mathbf{u}_k^H \mathbf{w} \big|^2 \big\} + \sum\nolimits_{m=1}^M  \mathbb{E} \big\{ | u_{mk}^\ast w_m |^2 \big\}  }
\vspace*{-0.1cm}
\end{equation}
\hrule
\vspace*{-0.4cm}
\end{figure*}
We stress that the throughput in \eqref{eq:Rkv1} can be achieved by arbitrary signal detection techniques at the satellite and APs, since it is a lower bound on the channel capacity. One can evaluate \eqref{eq:Rkv1} numerically, but requires many realizations of the small-scale fading coefficients to compute several expectations. The direct evaluation of \eqref{eq:Rkv1} by Monte Carlo simulations does not provide analytical insights about the impact of the individual parameters on the system performance. 
\vspace{-0.2cm}
\subsection{Uplink Throughput for Maximum Ratio Combining}
\vspace{-0.1cm}
For gaining further insights, we derive a  closed-form expression for \eqref{eq:Rkv1} by relying on statistical signal processing, when the MRC receiver is used by both the satellite and the AP, i.e., $u_{mk} = \hat{g}_{mk}, \forall m,k,$ and $\mathbf{u}_k = \hat{\mathbf{g}}_k, \forall k$. 
\begin{theorem} \label{Theorem:ClosedForm}
If the MRC receiver is utilized for detecting the desired signal, the UL throughput of user~$k$ is evaluated by  \eqref{eq:Rkv1} with the effective SINR value obtained in closed form as
\vspace*{-0.2cm}
\begin{equation} \label{eq:ClosedSINR}
 \mathrm{SINR}_k = \frac{\rho_k \big(\|\bar{\mathbf{g}}_k\|^2 +  p K \mathrm{tr}(\pmb{\Theta}_k)  +  \sum_{m=1}^M \gamma_{mk} \big)^2}{ \mathsf{MI}_k + \mathsf{NO}_k},
\vspace*{-0.2cm}
\end{equation}
where the mutual interference $\mathsf{MI}_k$, and noise $\mathsf{NO}_k$ are respectively given as follows
\vspace*{-0.2cm}
\begin{align}
\mathsf{MI}_k = & \sum\nolimits_{k' =1 , k' \neq k }^K \rho_{k'} |\bar{\mathbf{g}}_{k}^H  \bar{\mathbf{g}}_{k'} |^2  + p K \sum\nolimits_{k' =1}^K \rho_{k'}    \bar{\mathbf{g}}_{k'}^H\pmb{\Theta}_k \bar{\mathbf{g}}_{k'}  \notag \\
& + \sum\nolimits_{k' =1}^K \rho_{k'} \bar{\mathbf{g}}_{k}^H \mathbf{R}_{k'} \bar{\mathbf{g}}_{k}  +  p K \sum\nolimits_{k' =1 }^K \rho_{k'}   \mathrm{tr}( \mathbf{R}_{k'} \pmb{\Theta}_k )   \notag\\
& + \sum\nolimits_{k' =1}^K \sum\nolimits_{m=1}^M \rho_{k'}  \gamma_{mk} \beta_{mk'}, \label{eq:MIk}\\
\mathsf{NO}_k =&  \sigma_s^2 \|\bar{\mathbf{g}}_k\|^2 +  p K \sigma_s^2 \mathrm{tr}( \pmb{\Theta}_k ) + \sigma_a^2 \sum\nolimits_{m=1}^M  \gamma_{mk}. \label{eq:NOk}
\vspace*{-0.2cm}
\end{align}
\end{theorem}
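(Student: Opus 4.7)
The plan is a direct moment computation for the two observable quantities appearing in \eqref{eq:SINRk}: the deterministic effective gain $\mathbb{E}\{z_{kk}\}$ and the interference-plus-uncertainty term $\sum_{k'}\rho_{k'}\mathbb{E}\{|z_{kk'}|^2\}-\rho_k|\mathbb{E}\{z_{kk}\}|^2$, using the MMSE statistics supplied by Lemma~\ref{Lemma:Est}. After substituting $\mathbf{u}_k=\hat{\mathbf{g}}_k$ and $u_{mk}=\hat{g}_{mk}$, I would split $z_{kk'}$ as $z_{kk'}=A_{kk'}+B_{kk'}$, with $A_{kk'}=\hat{\mathbf{g}}_k^H\mathbf{g}_{k'}$ the satellite part and $B_{kk'}=\sum_{m}\hat{g}_{mk}^{\ast}g_{mk'}$ the terrestrial part. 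Because the satellite and AP propagation environments (and their training noises) are independent, $A_{kk'}$ and $B_{kk'}$ are independent, which lets every second moment be handled by treating the two aggregates separately and only then summing.

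The numerator follows quickly: since $\mathbb{E}\{e_{mk}\}=0$ and $\mathbb{E}\{\mathbf{e}_k\}=\mathbf{0}$, the independence of the MMSE estimates and their errors in Lemma~\ref{Lemma:Est} gives $\mathbb{E}\{z_{kk}\}=\mathbb{E}\{\|\hat{\mathbf{g}}_k\|^2\}+\sum_{m}\mathbb{E}\{|\hat{g}_{mk}|^2\}=\|\bar{\mathbf{g}}_k\|^2+pK\,\mathrm{tr}(\pmb{\Theta}_k)+\sum_m\gamma_{mk}$, squared to produce the numerator of \eqref{eq:ClosedSINR}. For the denominator I would treat $k'\neq k$ and $k'=k$ separately. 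For $k'\neq k$, the cross mean $\mathbb{E}\{A_{kk'}\}\mathbb{E}\{B_{kk'}^\ast\}$ vanishes because $B_{kk'}$ has zero mean (different-user AP channels are independent and zero mean), so $\mathbb{E}\{|z_{kk'}|^2\}=\mathbb{E}\{|A_{kk'}|^2\}+\mathbb{E}\{|B_{kk'}|^2\}$. The satellite piece reduces via independence to the clean trace formula $\mathrm{tr}\big(\mathbb{E}\{\hat{\mathbf{g}}_k\hat{\mathbf{g}}_k^H\}\,\mathbb{E}\{\mathbf{g}_{k'}\mathbf{g}_{k'}^H\}\big)$, where $\mathbb{E}\{\hat{\mathbf{g}}_k\hat{\mathbf{g}}_k^H\}=\bar{\mathbf{g}}_k\bar{\mathbf{g}}_k^H+pK\pmb{\Theta}_k$ and $\mathbb{E}\{\mathbf{g}_{k'}\mathbf{g}_{k'}^H\}=\bar{\mathbf{g}}_{k'}\bar{\mathbf{g}}_{k'}^H+\mathbf{R}_{k'}$; expanding the four cross terms produces the $|\bar{\mathbf{g}}_k^H\bar{\mathbf{g}}_{k'}|^2$, $\bar{\mathbf{g}}_k^H\mathbf{R}_{k'}\bar{\mathbf{g}}_k$, $pK\,\bar{\mathbf{g}}_{k'}^H\pmb{\Theta}_k\bar{\mathbf{g}}_{k'}$ and $pK\,\mathrm{tr}(\mathbf{R}_{k'}\pmb{\Theta}_k)$ pieces. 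The terrestrial piece collapses to $\sum_m\gamma_{mk}\beta_{mk'}$ because distinct-AP channels are mutually independent and $\mathbb{E}\{|g_{mk'}|^2\}=\beta_{mk'}$.

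For $k'=k$ I would instead compute $\mathrm{Var}(z_{kk})=\mathrm{Var}(A_{kk})+\mathrm{Var}(B_{kk})$. Writing $\mathbf{g}_k=\hat{\mathbf{g}}_k+\mathbf{e}_k$ with independent summands yields $\mathrm{Var}(A_{kk})=\mathrm{Var}(\|\hat{\mathbf{g}}_k\|^2)+\mathbb{E}\{\hat{\mathbf{g}}_k^H(\mathbf{R}_k-pK\pmb{\Theta}_k)\hat{\mathbf{g}}_k\}$. The second moment of $\|\hat{\mathbf{g}}_k\|^2$ is the one nontrivial step: decomposing $\hat{\mathbf{g}}_k=\bar{\mathbf{g}}_k+\tilde{\mathbf{g}}_k$ with $\tilde{\mathbf{g}}_k\sim\mathcal{CN}(\mathbf{0},pK\pmb{\Theta}_k)$ and using the standard complex-Gaussian quadratic-form identity $\mathbb{E}\{\|\tilde{\mathbf{g}}_k\|^4\}=(pK\mathrm{tr}(\pmb{\Theta}_k))^2+p^2K^2\mathrm{tr}(\pmb{\Theta}_k^2)$ together with the vanishing of the odd (linear $\times$ quadratic) cross moments gives $\mathrm{Var}(\|\hat{\mathbf{g}}_k\|^2)=2pK\,\bar{\mathbf{g}}_k^H\pmb{\Theta}_k\bar{\mathbf{g}}_k+p^2K^2\mathrm{tr}(\pmb{\Theta}_k^2)$. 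The analogous terrestrial computation uses that each $|\hat{g}_{mk}|^2$ is exponential with parameter $\gamma_{mk}$ and independent across $m$, yielding $\mathrm{Var}(B_{kk})=\sum_m\gamma_{mk}^2+\sum_m\gamma_{mk}(\beta_{mk}-\gamma_{mk})=\sum_m\gamma_{mk}\beta_{mk}$.

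The anticipated obstacle is precisely the quadratic-form computation for the Rician vector $\hat{\mathbf{g}}_k$: the terms $p^2K^2\mathrm{tr}(\pmb{\Theta}_k^2)$ that appear in $\mathrm{Var}(\|\hat{\mathbf{g}}_k\|^2)$ do not appear in the statement, so the key check is that they cancel against the $-p^2K^2\mathrm{tr}(\pmb{\Theta}_k^2)$ coming from expanding $\mathbb{E}\{\hat{\mathbf{g}}_k^H(\mathbf{R}_k-pK\pmb{\Theta}_k)\hat{\mathbf{g}}_k\}$. After this cancellation $\mathrm{Var}(A_{kk})$ reduces to $\bar{\mathbf{g}}_k^H\mathbf{R}_k\bar{\mathbf{g}}_k+pK\,\bar{\mathbf{g}}_k^H\pmb{\Theta}_k\bar{\mathbf{g}}_k+pK\,\mathrm{tr}(\mathbf{R}_k\pmb{\Theta}_k)$, which merges with the $k'\neq k$ contributions into the unrestricted sums over $k'$ in \eqref{eq:MIk}. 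Finally, the noise moments are immediate: $\mathbb{E}\{|\hat{\mathbf{g}}_k^H\mathbf{w}|^2\}=\sigma_s^2\mathbb{E}\{\|\hat{\mathbf{g}}_k\|^2\}$ and $\sum_m\mathbb{E}\{|\hat{g}_{mk}^\ast w_m|^2\}=\sigma_a^2\sum_m\gamma_{mk}$, producing \eqref{eq:NOk}.
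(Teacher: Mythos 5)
Your proposal is correct and follows exactly the route the paper indicates (and omits for space): a direct computation of the moments in \eqref{eq:SINRk} from the channel statistics of Lemma~\ref{Lemma:Est}, splitting $z_{kk'}$ into independent satellite and terrestrial parts. The delicate points you flag all check out — in particular the cancellation of $p^2K^2\mathrm{tr}(\pmb{\Theta}_k^2)$ between $\mathrm{Var}(\|\hat{\mathbf{g}}_k\|^2)$ and $\mathbb{E}\{\hat{\mathbf{g}}_k^H(\mathbf{R}_k-pK\pmb{\Theta}_k)\hat{\mathbf{g}}_k\}$, and the merging of the $k'=k$ variance terms into the unrestricted sums of \eqref{eq:MIk}.
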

\begin{proof}
The proof computes the expectations in \eqref{eq:SINRk} using the channel models in Section~\ref{Sec:ChannelModel} and the statistical information in Lemma~\ref{Lemma:Est} and it is omitted due to space limitations.
\vspace*{-0.2cm}
\end{proof}
The UL throughput of user~$k$ obtained in Theorem~\ref{Theorem:ClosedForm} is a function of the channel statistics, which has a complex expression due to the presence of space links. The spatial correlation and the LoS components created by the presence of the satellite beneficially boost the desired signals, as shown in the numerator of \eqref{eq:ClosedSINR}. The denominator of \eqref{eq:ClosedSINR} represents the interference and noise that degrades the performance, where the SINR is linearly proportional both to  the number of satellite antennas and  APs. 
\begin{remark} 
 The coexistence of the satellite and APs together with the coherent data processing at the CPU yield a quadratic array gain on the order of $(M + 2N)^2$. The closed-form expression of the ergodic data throughput in \eqref{eq:ClosedSINR} quantifies the  improvements offered by space-terrestrial communications. The stand-alone terrestrial communications only provides an array gain scaling increased with the number of APs, i.e., say $M^2$, while the dominant LoS path in each satellite link  boosts the array gains with the order of $4N^2$. 
\vspace*{-0.2cm}
\end{remark} 
\vspace*{-0.2cm}
\section{Uplink Data Power Allocation for Space-Terrestrial Communications} \label{Sec:PowerAllocation}
\vspace*{-0.2cm}
This section considers the max-min fairness optimization problem, which underlines the considerable benefits of a collaboration between the space and terrestrial links under a finite transmit power at each user.
\vspace*{-0.2cm}
\subsection{Max-Min Fairness Optimization}
\vspace*{-0.2cm}
Fairness is of paramount importance for planning the networks to provide an adequate throughput for all users by maximizing the lowest achievable ergodic rate. The max-min fairness optimization is formulated as
\vspace*{-0.2cm}
\begin{subequations} \label{Problem:MaxMinQoS}
	\begin{alignat}{2}
		& \underset{ \{ \rho_{k} \} }{\textrm{maximize}} \; \underset{k}{\textrm{min}}
		& & \, \, R_k \label{eq:Obj1} \\
		& \textrm{subject to}
		& & 0 \leq \rho_{k} \leq P_{\mathrm{max},k} \;, \forall k  ,
	\end{alignat}
\vspace*{-0.1cm}
\end{subequations}
where $P_{\mathrm{max},k}$ is the maximum power that user~$k$ can allocate to each data symbol. Due to the universality of the data throughput expression of \eqref{eq:Rkv1}, Problem~\eqref{Problem:MaxMinQoS} is applicable to any linear receiver combining method. This paper focuses on the MRC method because of the closed-form SINR expression for each user  shown in \eqref{eq:ClosedSINR}.\footnote{An extension to the other linear combining technique can be accomplished by using the same methodology, but may require extra cost to evaluate the expectations in \eqref{eq:SINRk} numerically.} The main features of \eqref{Problem:MaxMinQoS} are given in Lemma~\ref{Lemma:QuasiConcave}. 
\vspace*{-0.2cm}
\begin{lemma} \label{Lemma:QuasiConcave}
 Problem~\eqref{Problem:MaxMinQoS} is  quasi-concave as the objective function is constructed based on the ergodic UL throughput in \eqref{eq:Rkv1} with the SINR expression in \eqref{eq:ClosedSINR}.
 \vspace*{-0.2cm}
\end{lemma}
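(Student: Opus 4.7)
The plan is to establish quasi-concavity of Problem~\eqref{Problem:MaxMinQoS} by showing that each ergodic throughput $R_k$ is quasi-concave in the power vector $\pmb{\rho} = (\rho_1,\ldots,\rho_K)^T$, and then combining the $R_k$'s through the pointwise minimum. The feasible set is already the convex box $\prod_{k=1}^K [0, P_{\mathrm{max},k}]$, so the remaining work is to confirm that the objective is quasi-concave on this box.

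The main technical step, and the one that requires careful bookkeeping, is to recognize $\mathrm{SINR}_k$ in \eqref{eq:ClosedSINR} as a linear-fractional function of $\pmb{\rho}$. First I would observe that the numerator of \eqref{eq:ClosedSINR} depends on $\pmb{\rho}$ only through $\rho_k$, because the bracketed quantity $\|\bar{\mathbf{g}}_k\|^2 + pK\,\mathrm{tr}(\pmb{\Theta}_k) + \sum_{m=1}^M \gamma_{mk}$ is built from channel statistics and the (fixed) pilot power $p$. Hence the numerator equals $c_k \rho_k$ with $c_k>0$ a deterministic constant. Next, inspection of \eqref{eq:MIk} and \eqref{eq:NOk} shows that every summand in $\mathsf{MI}_k$ has the form (non-negative constant)$\times \rho_{k'}$, while every summand in $\mathsf{NO}_k$ is a strictly positive constant. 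Therefore the denominator admits the representation $\mathbf{a}_k^T\pmb{\rho} + b_k$ with $\mathbf{a}_k \geq \mathbf{0}$ and $b_k>0$, so it is strictly positive on the entire feasible box.

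With this representation in hand, the super-level set $\{\pmb{\rho} : \mathrm{SINR}_k(\pmb{\rho}) \geq \alpha\}$ coincides, after clearing the positive denominator, with the half-space $\{\pmb{\rho} : c_k \rho_k - \alpha \mathbf{a}_k^T \pmb{\rho} \geq \alpha b_k\}$, which is convex for every $\alpha$; so $\mathrm{SINR}_k$ is quasi-concave. Since $R_k = (1-K/\tau_c)\,B\log_2(1+\mathrm{SINR}_k)$ is a strictly increasing transformation of $\mathrm{SINR}_k$, quasi-concavity is inherited by $R_k$. To finish, I would invoke the fact that $\{\pmb{\rho} : \min_k R_k(\pmb{\rho}) \geq \alpha\} = \bigcap_{k=1}^K \{\pmb{\rho} : R_k(\pmb{\rho}) \geq \alpha\}$ is an intersection of convex sets, hence convex, so $\min_k R_k$ is quasi-concave. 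Combined with the convexity of the feasible box, this delivers the claimed quasi-concavity of \eqref{Problem:MaxMinQoS}.
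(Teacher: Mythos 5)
Your proof is correct and follows the same route the paper indicates: the paper's (omitted) argument is explicitly "based on the definition of the upper-level set for a quasi-concave problem," and you carry exactly that out by showing each $\mathrm{SINR}_k$ is linear-fractional in $\pmb{\rho}$ (linear numerator in $\rho_k$, affine positive denominator), so its super-level sets are half-spaces, with quasi-concavity then passing through the increasing $\log_2(1+\cdot)$ map and the pointwise minimum. You have simply filled in the details the paper leaves out.
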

\begin{proof}
The proof is based on the definition of the upper-level set for a quasi-concave problem. The detailed proof is omitted due to space limitations.
\vspace*{-0.1cm}
\end{proof}
From the results obtained by Lemma~\ref{Lemma:QuasiConcave},  we exploit the quasi-concavity to find the most energy-efficient solution. Upon exploiting that $\mathrm{SINR}_k = 2^{\tau_c R_k /(B(\tau_c -K))}, \forall k$,  Problem~\eqref{Problem:MaxMinQoS} is reformulated in an equivalent form by exploiting the epigraph representation of \cite[page 134]{Boyd2004a} as follows
\vspace*{-0.2cm}
\begin{equation} \label{Problem:Epigraphform}
	\begin{aligned}
		& \underset{ \{ \rho_{k} \} }{\textrm{maximize}} 
		& &  \xi \\
		& \textrm{subject to} && \mathrm{SINR}_k \geq \xi , \forall k, \\
		& & & 0 \leq \rho_{k} \leq P_{\mathrm{max},k} \;, \forall k.
	\end{aligned}
\vspace*{-0.2cm}
\end{equation}
Problem~\eqref{Problem:Epigraphform} could be viewed as a geometric program to attain the maximal fairness level, but this would impose high computational complexity, since a hidden convex structure should be deployed \cite{van2018joint}. Observe that, for a given value of $\xi = \xi_o$ in the feasible domain, the minimum total transmit power consumption is obtained by the solution of the following optimization problem
\vspace*{-0.2cm}
\begin{subequations}\label{Problem:TotalTransmitPower}
	\begin{alignat}{2}
		& \underset{ \{ \rho_{k} \} }{\textrm{minimize}} 
		& &  \sum\nolimits_{k=1}^K \rho_k \\
		& \textrm{subject to} && \, \, \mathrm{SINR}_k \geq \xi_o , \forall k, \label{eq:SINRConstraints}\\
		& & & 0 \leq \rho_{k} \leq P_{\mathrm{max},k} \;, \forall k.
	\end{alignat}
\end{subequations}
After that, the most energy-efficient solution of Problem~\eqref{Problem:Epigraphform} should be obtained by finding the maximum value of the variable $\xi$ of using, for example, the popular bisection method.  \eqref{Problem:TotalTransmitPower} is  a linear program, so a canonical algorithm can get the global solution by the classic interior-point method. The main cost in each iteration is associated with computing the first derivative of the SINR constraints \eqref{eq:SINRConstraints}, which might still impose high computational complexity. Subsequently, in this paper, we propose a low complexity algorithm based on the alternating optimization approach and the closed-form solution for each power coefficient by virtue of the standard interference function \cite{van2021uplink}. By stacking all the transmit data powers in a vector $\pmb{\rho} = [\rho_1, \ldots, \rho_K] \in \mathbb{R}_+^K$, the SINR constraint of user~$k$ is reformulated as
$\rho_k \geq I_k (\pmb{\rho})$,
where  the standard interference function $I_k (\pmb{\rho})$ is defined as
\vspace*{-0.2cm}
\begin{equation}\label{eq:Ikrho}
I_k (\pmb{\rho}) =\frac{\xi_o\mathsf{MI}_k (\pmb{\rho}) + \xi_o \mathsf{NO}_k}{\left|\|\bar{\mathbf{g}}_k\|^2 +  p K \mathrm{tr}(\pmb{\Theta}_k)  +  \sum\nolimits_{m=1}^M \gamma_{mk} \right|^2},
\vspace*{-0.2cm}
\end{equation}
where the detailed expression of $\mathsf{MI}_k (\pmb{\rho})$ has been given in \eqref{eq:MIk}, but here we express it as a function of the transmit power variables stacked in $\pmb{\rho}$. Apart from the SINR constraint, the data power of each user should satisfy the individual power budget, hence we have
\vspace*{-0.2cm}
\begin{equation} \label{eq:PowerConstraint}
 I_k (\pmb{\rho}) \leq \rho_k \leq P_{\max,k}.
\vspace*{-0.2cm}
\end{equation}
One can search across the range of each data power variable observed in  \eqref{eq:PowerConstraint}, where the global optimum  of Problem~\eqref{Problem:Epigraphform} is validated by Theorem~\ref{Theorem:Bisection}.
\vspace*{-0.2cm}
\begin{theorem} \label{Theorem:Bisection}
For a given feasible $\xi_o$ value and the initial data powers $\rho_k(0) = P_{\max,k}, \forall k$, the globally optimal solution of Problem~\eqref{Problem:TotalTransmitPower} is obtained by computing the standard interference function in \eqref{eq:Ikrho} and the power constraint in \eqref{eq:PowerConstraint} for all users. In more detail, if the data power of user~$k$ is updated at iteration~$n$ as   
\vspace*{-0.2cm}
\begin{equation} \label{eq:rho}
\rho_k(n) = I_k(\pmb{\rho}(n-1)),
\vspace*{-0.2cm}
\end{equation}
where $I_k(\pmb{\rho}(n-1))$ is defined in \eqref{eq:Ikrho} with $\pmb{\rho}(n-1)$ denoting the data power vector from the previous iteration, then this iterative approach converges to the unique optimal solution after a finite number of iterations.  Owning to the feasibility of $\xi_o$, it holds that $I_k(\pmb{\rho}(n-1)) \leq P_{\max,k}, \forall k$.

Then, the most energy-efficient solution of Problem~\eqref{Problem:Epigraphform} is obtained by updating the lower bound of the SINR values across the search range $\xi_o \in [0,  \xi_o^{\mathrm{up}}]$, where $\xi_o^{\mathrm{up}}$ is given by
\vspace*{-0.2cm}
\begin{equation} \label{eq:xioup}
\xi_o^{\mathrm{up}} =  \underset{k}{\min} \, \frac{P_{\max,k} \big|\|\bar{\mathbf{g}}_k\|^2 +  p K \mathrm{tr}(\pmb{\Theta}_k)  +  \sum_{m=1}^M \gamma_{mk} \big|^2}{ \mathsf{NO}_k}.
\vspace*{-0.1cm}
\end{equation}
Along all considered values of the variable $\xi_0$, Problem~\eqref{Problem:TotalTransmitPower} is infeasible if the following condition is met at least by one user for a given value $\xi_o$ as
\vspace*{-0.2cm}
 \begin{equation} \label{eq:IkPmaxk}
 R_k (\pmb{\rho}(n)) < B(1-K/\tau_c)\log_2(1+\xi_o), \exists k \in \{1, \ldots, K\}.
 \vspace*{-0.2cm}
 \end{equation}
\end{theorem}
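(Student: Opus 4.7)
The plan is to build the argument on Yates' framework of standard interference functions. First I would verify that the map $I_k(\pmb{\rho})$ defined in \eqref{eq:Ikrho} is a standard interference function in the sense of Yates, i.e., it satisfies positivity, monotonicity, and scalability. Positivity is immediate from $\xi_o>0$ and $\mathsf{NO}_k>0$. Monotonicity follows by inspecting \eqref{eq:MIk}: $\mathsf{MI}_k(\pmb{\rho})$ is an affine map of $\pmb{\rho}$ whose coefficients are non-negative (each is a squared inner product, a trace of a product of positive semidefinite matrices, or a product of variances $\gamma_{mk}\beta_{mk'}$), so enlarging any component of $\pmb{\rho}$ cannot decrease $I_k$. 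Scalability reduces to a one-line check: replacing $\pmb{\rho}$ by $\alpha\pmb{\rho}$ with $\alpha>1$ scales $\mathsf{MI}_k$ by exactly $\alpha$ while leaving $\mathsf{NO}_k$ untouched, yielding $\alpha I_k(\pmb{\rho}) - I_k(\alpha\pmb{\rho}) = (\alpha-1)\xi_o\mathsf{NO}_k/|\,\cdot\,|^2 > 0$.

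With the three Yates axioms in hand, I would invoke the corresponding fixed-point theorem: whenever Problem~\eqref{Problem:TotalTransmitPower} is feasible for the target $\xi_o$, the iteration $\rho_k(n)=I_k(\pmb{\rho}(n-1))$ converges to a unique fixed point $\pmb{\rho}^\star$ that meets every SINR constraint with equality and simultaneously minimises $\sum_k \rho_k$. Starting from $\rho_k(0)=P_{\max,k}$ instead of from zero requires a short side argument: by feasibility there exists at least one admissible power profile, hence $I_k(\pmb{\rho}(0)) \leq P_{\max,k}$ for all $k$; monotonicity then propagates to $\rho_k(n+1)\leq\rho_k(n)$ for every $n$, so the generated sequence is non-increasing, bounded below by zero, and therefore convergent, with its limit necessarily coinciding with $\pmb{\rho}^\star$ by uniqueness.

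For the upper bound $\xi_o^{\mathrm{up}}$, I would argue that in any feasible allocation the mutual interference $\mathsf{MI}_k(\pmb{\rho})$ is non-negative, hence the SINR expression in \eqref{eq:ClosedSINR} is dominated by its noise-only counterpart, which is itself maximised by $\rho_k=P_{\max,k}$. Since the common target $\xi_o$ must be satisfied by every user simultaneously, taking the minimum over $k$ yields the necessary condition $\xi_o \leq \xi_o^{\mathrm{up}}$, and this determines the bisection interval $[0,\xi_o^{\mathrm{up}}]$. The infeasibility certificate \eqref{eq:IkPmaxk} then emerges from the same monotone-convergence picture: if, at some iteration $n$, the achieved rate $R_k(\pmb{\rho}(n))$ drops strictly below $B(1-K/\tau_c)\log_2(1+\xi_o)$, then the budget $\rho_k \leq P_{\max,k}$ rules out any further increase that could restore the target, so no fixed point inside the feasible box exists. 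The step I expect to be the main obstacle is not the Yates verification itself but confirming term-by-term in \eqref{eq:MIk} that every coefficient of $\rho_{k'}$ is genuinely non-negative and that $\mathsf{NO}_k$ is strictly positive; both are indispensable for making the scalability inequality strict, and the former requires identifying each summand as a quadratic form in a positive semidefinite covariance matrix $\mathbf{R}_{k'}$ or $\pmb{\Theta}_k$, or as a squared magnitude of a channel inner product.
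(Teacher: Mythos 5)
Your route is the one the paper itself sketches for this theorem: verify that $I_k(\pmb{\rho})$ in \eqref{eq:Ikrho} satisfies Yates' positivity, monotonicity and scalability axioms, invoke the fixed-point theorem for standard interference functions to get uniqueness, convergence and minimality of the limit, and obtain $\xi_o^{\mathrm{up}}$ in \eqref{eq:xioup} as the noise-limited SINR at full power, minimized over users. The axiom verification is sound (every coefficient of $\rho_{k'}$ in \eqref{eq:MIk} is a squared inner product, a quadratic form in a positive semidefinite matrix such as $\pmb{\Theta}_k$ or $\mathbf{R}_{k'}$, or a product of nonnegative variances, and $\mathsf{NO}_k>0$ makes the scalability inequality strict), and the upper-bound argument is exactly right.

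One step does not hold as written: from feasibility of Problem~\eqref{Problem:TotalTransmitPower} you conclude ``hence $I_k(\pmb{\rho}(0))\leq P_{\max,k}$'' with $\pmb{\rho}(0)=\pmb{P}_{\max}$. Monotonicity runs the wrong way here: a feasible $\pmb{\rho}'\leq \pmb{P}_{\max}$ with $\rho_k'\geq I_k(\pmb{\rho}')$ only yields $I_k(\pmb{P}_{\max})\geq I_k(\pmb{\rho}')$, a lower bound rather than the upper bound you need. For an affine standard interference function one can construct a feasible box-constrained instance with $I_1(\pmb{P}_{\max})>P_{\max,1}$ (two users with very asymmetric budgets and strong cross-coupling suffice), so the monotone non-increase of the iterates started at $\pmb{P}_{\max}$ --- and the theorem's own claim that $I_k(\pmb{\rho}(n-1))\leq P_{\max,k}$ --- do not follow from feasibility alone. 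Convergence to the unique minimal fixed point is still guaranteed, because Yates' theorem gives convergence from an arbitrary initialization (the iterates are sandwiched between the increasing sequence started at $\mathbf{0}$ and a decreasing sequence started from a feasible point); alternatively, one can work with the clipped update $\min\left(I_k(\cdot),P_{\max,k}\right)$, which is itself a standard interference function and is what Algorithm~\ref{Algorithm1} actually implements in \eqref{eq:rhotilde}, and then certify feasibility a posteriori via \eqref{eq:IkPmaxk}. Either repair closes the gap; as stated, the ``hence'' is a non sequitur.
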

\begin{proof}
The main proof hinges on verifying the standard interference function defined for each user and on finding an upper bound for the bisection method. The detailed proof is omitted due to space limitations.
\end{proof}

From the analytical features in Theorem~\ref{Theorem:Bisection}, the proposed alternating technique of finding the optimal solution to Problem~\eqref{Problem:MaxMinQoS} is shown in Algorithm~\ref{Algorithm1} by initially setting the maximum data power to each user, i.e., $\rho_k(0) = P_{\max,k}, \forall k$ and the range of $[\xi_{\min,o}, \xi_{\max,o}]$ for the parameter $\xi_o$ by capitalizing on \eqref{eq:xioup}. The bisection is utilized to update $\xi_o$, while the data powers are iteratively updated by the standard interference function seen in \eqref{eq:rho} subject to the condition \eqref{eq:PowerConstraint}. In particular, for a given value of $\xi_o = (\xi_{\min,o} + \xi_{\max,o})/2$, the temporary data power coefficients  are set as $\tilde{\rho}_k(0) = \rho_k(0), \forall k$. Then user~$k$ will update its temporary data power at inner iteration~$m$ as
\vspace*{-0.1cm}
\begin{equation} \label{eq:rhotilde}
	\tilde{\rho}_k(m) = \min({I}_{k} \left(\tilde{\pmb{\rho}} (m-1) \right), P_{\max,k}).
\vspace*{-0.1cm}
\end{equation}
The inner loop can be terminated, when the difference between two consecutive iterations is small. For example, we may compute the normalized total power consumption ratio
\vspace*{-0.1cm}
\begin{equation} \label{eq:gamman}
T  = \frac{\left|\sum_{k=1}^K \tilde{\rho}_{k}(m)  - \sum_{k=1}^K \tilde{\rho}_{k}(m-1)  \right|}{\sum_{k=1}^K \tilde{\rho}_{k}(m-1) }.
\vspace*{-0.1cm}
\end{equation}
The data power will converge to the optimal solution with tolerance as $\gamma(n) \leq \epsilon$. For a given value $\xi_o$, if Problem~\eqref{Problem:TotalTransmitPower} is feasible, the lower bound of $\xi_o$ is then updated, yielding $\xi_{\min,o} = \xi_o$ after obtaining the data power solution. Otherwise, \eqref{eq:IkPmaxk} is utilized  to detect if there is no solution to Problem~\eqref{Problem:TotalTransmitPower} for a given $\xi_o$. The closed-form expression in updating the temporary data power as shown in \eqref{eq:rhotilde} will be the data power of user~$k$ if $R_k(\tilde{\pmb{\rho}}_{k}(m)) \geq R_o, \forall k$. It means that outer iteration~$n$ will perform  $\rho_k(n) = \tilde{\rho}_k(m)$ 
and update $\xi_{\min,o} = \xi_o$. Otherwise, the upper bound $\xi_{\max,o}$ will be shrunk as $\xi_{\max,o} =\xi_o$. Assuming that the dominant arithmetic operators are multiplications and divisions, we can estimate the computational complexity of Algorithm~\ref{Algorithm1} is of the order of $
C_1 = \mathcal{O}\left( \lceil \log_2 (\xi_o^{\mathrm{up}}/\delta) \rceil  \left( (4K^2 + MK^2 + 5K)(U_1 + 1)\right) \right)$
where $ \lceil \cdot \rceil $ is the ceiling function. The computational complexity is directly proportional to the number of APs and it is in a quadratic function of the number of users. However,  the computational complexity of our proposed algorithm does not depend on the number of satellite antennas.
\begin{algorithm}[h]
	\caption{Data power allocation to Problem~\eqref{Problem:MaxMinQoS}} \label{Algorithm1}
	\textbf{Input}:  Define  $P_{\max,k}, \forall k$; Select  $\rho_{k}(0) = P_{\max,k}, \forall k$; Set $\xi_{o}^{\mathrm{up}}$ as in \eqref{eq:xioup}; Define $\xi_{\min,o} = 0$ and $\xi_{\max,o} = \xi_{o}^{\mathrm{up}}$; Set the inner tolerance $\epsilon$ and the outer tolerance $\delta$.
	\begin{itemize}[leftmargin=*]
		\item[1.]  Initialize the outer loop index $n=1$.
		\item[2.] \textbf{while} $\xi_{\max,o} - \xi_{\min,o} > \delta$ \textbf{do}
		\begin{itemize}
			\item[1.1.] Set $\tilde{\rho}_k(0) = \rho_k(0), \forall k$; Set $\xi_o = (\xi_{\min,o}  + \xi_{\max,o} )/2$ and compute $R_o =  B(1-K/\tau_c)\log_2(1+\xi_o)$.
			\item[1.2.] Compute $P_{\mathrm{tot}}(0) =\sum_{k=1}^K \rho_{k}(0)$.
			\item[1.3] Initialize the accuracy $T= P_{\mathrm{tot}}(0)$ and set $m=1$.
			\item[1.4.] \textbf{while} $T> \varepsilon$ \textbf{do}
			\begin{itemize}
				\item[1.4.1.] User~$k$ computes ${I}_{k} \left(\tilde{\pmb{\rho}} (m-1) \right)$ using \eqref{eq:rho}.
				\item[1.4.2.] User~$k$ updates its temporary data power as \eqref{eq:rhotilde}.
				\item[1.4.3.] Repeat Steps $1.4.1$ and $1.4.2$ with other users, then update the accuracy as in \eqref{eq:gamman}.
				\item[1.4.4.] If $T \leq \varepsilon$, compute $R_k(\tilde{\pmb{\rho}}_{k}(m)),\forall  k,$ and go to Step~$1.6$. Otherwise, set $m= m+1$ and go to Step $1.4.1$.
			\end{itemize}
			\item[1.5.] \textbf{End while}
			\item[1.6.] If $\exists k, R_k(\tilde{\pmb{\rho}}_{k}(m)) < R_o$, set $\xi_{\max,o} = \xi_o$ and go to Step 1. Otherwise, update $\rho_{k}(n) = 	\tilde{\rho}_k(m), \forall k,$ and set $\xi_{\min,o} = \xi_o$, set $n=n+1$, and go to Step~$1$.
			
		\end{itemize}
		\item[3.] \textbf{End while}
		\item[4.]  Set $\rho_{k}^{\ast} = \rho_{k}(n),\forall  k$.
	\end{itemize}
	\textbf{Output}: Final interval $[\xi_{\min,o}, \xi_{\max,o}]$ and $\{ \rho_{k}^{\ast} \}$, $\forall l,k$. \vspace*{-0.0cm}
\end{algorithm}
\begin{figure*}[t]
	\begin{minipage}{0.24\textwidth}
		\centering
		\includegraphics[trim=0.8cm 0.0cm 1.4cm 0.6cm, clip=true, width=1.74in]{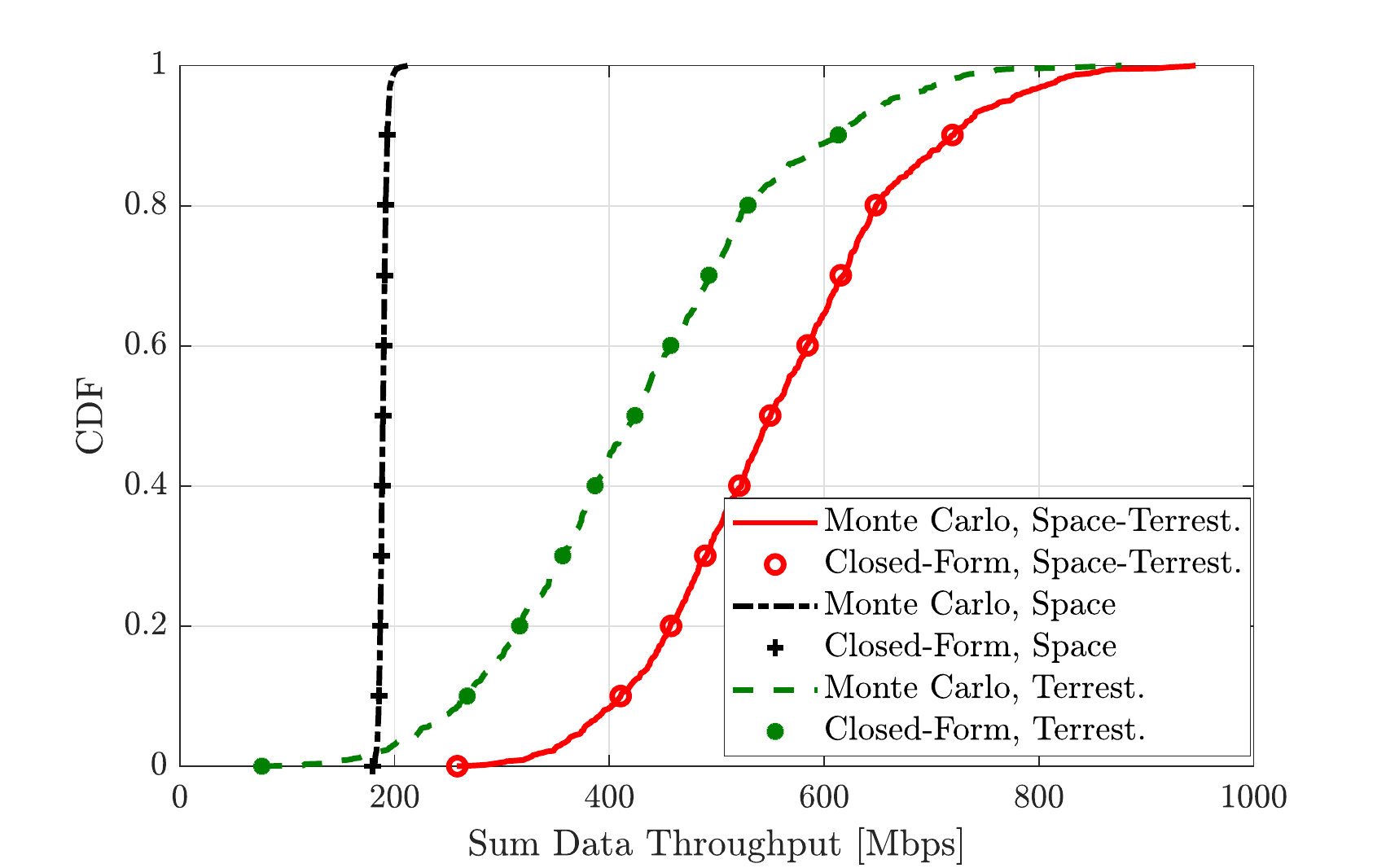} \vspace*{-0.5cm}\\
		\fontsize{8}{8}{$(a)$}
		\vspace*{-0.3cm}
	\end{minipage}
	\hfil
	\begin{minipage}{0.24\textwidth}
		\centering
		\includegraphics[trim=0.8cm 0.0cm 1.4cm 0.6cm, clip=true, width=1.78in]{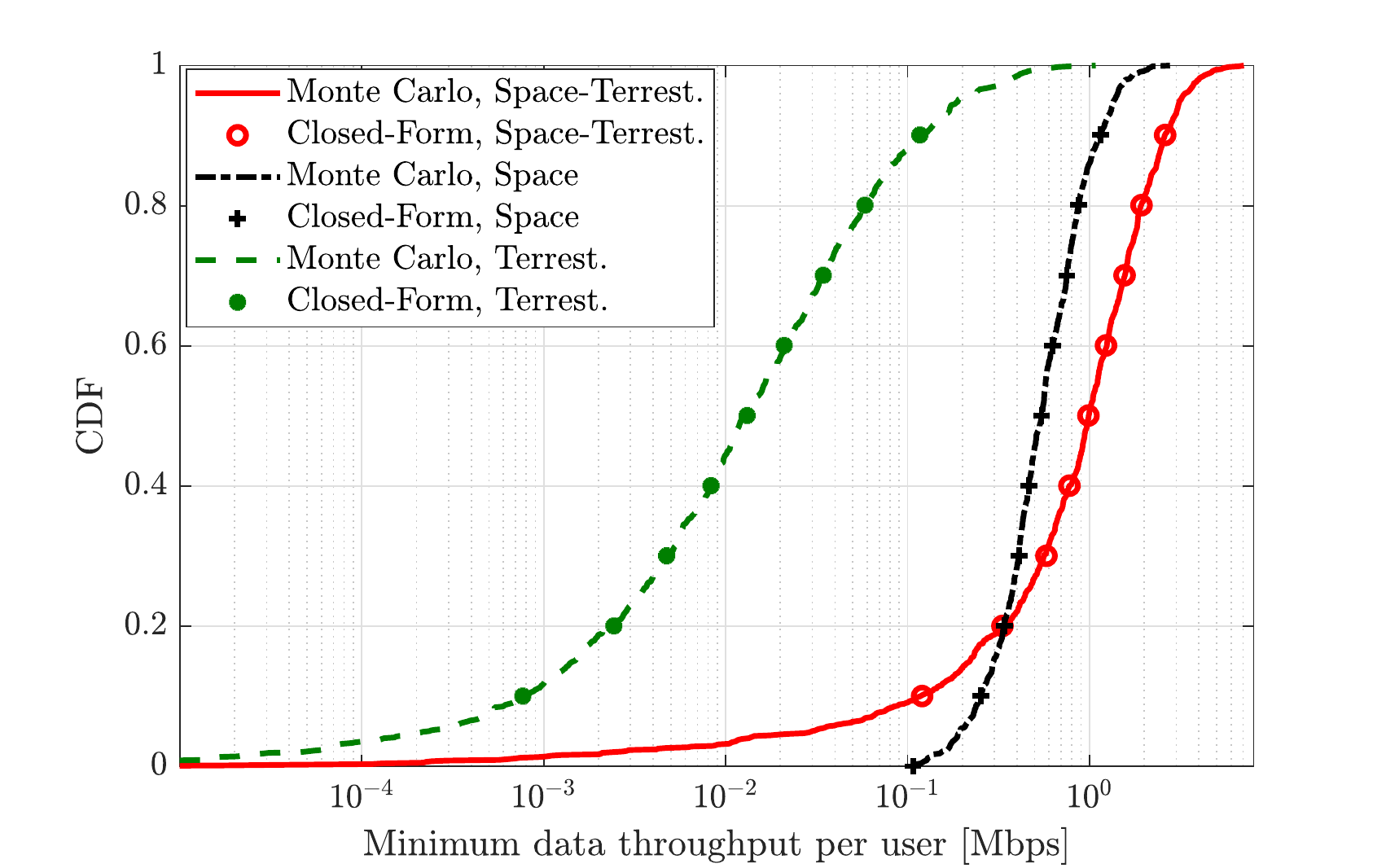} \vspace*{-0.5cm} \\
		\fontsize{8}{8}{$(b)$}
		\vspace*{-0.3cm}
	\end{minipage}
\hfil
\begin{minipage}{0.24\textwidth}
	\centering
	\includegraphics[trim=0.8cm 0.0cm 1.4cm 0.6cm, clip=true, width=1.8in]{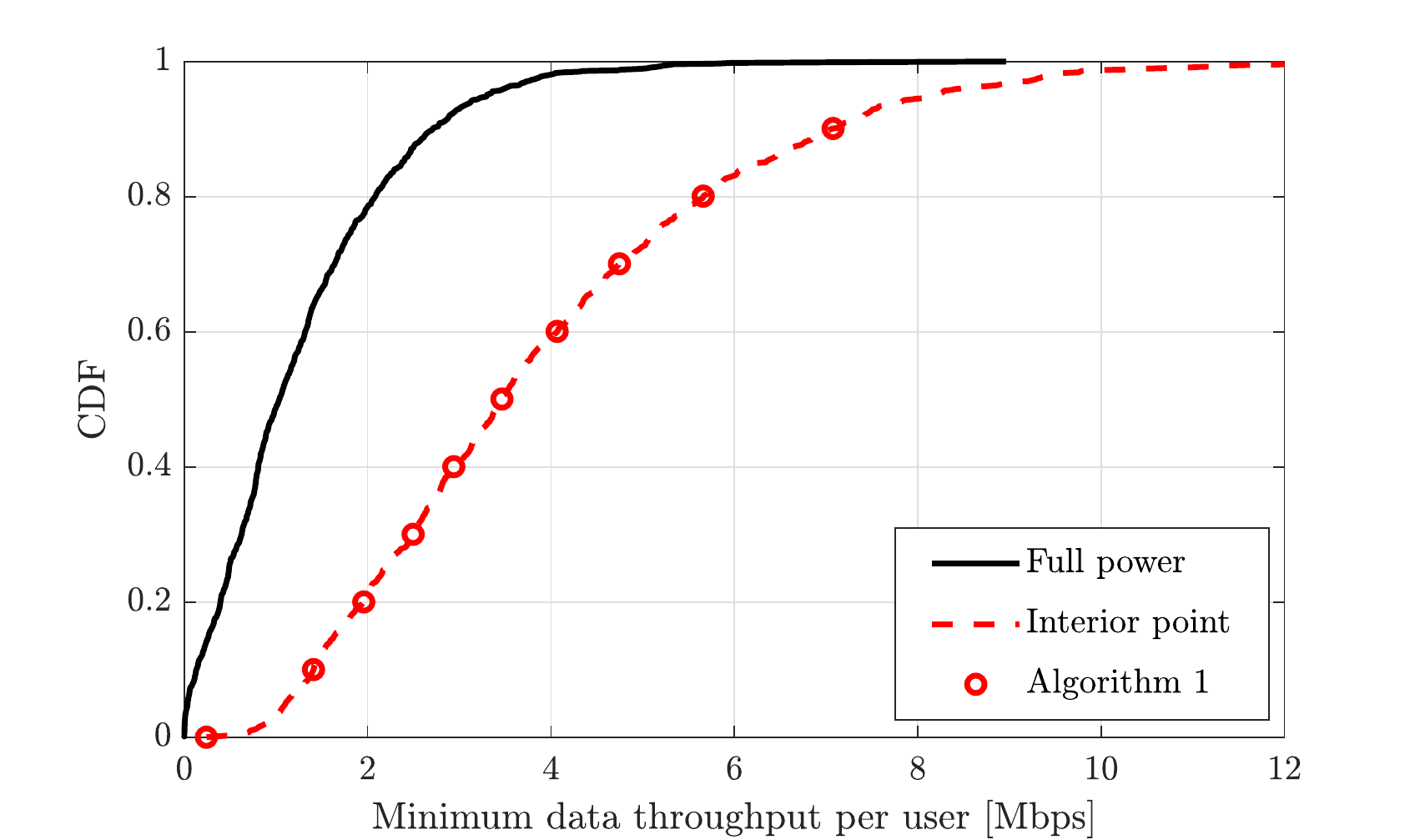} \vspace*{-0.5cm} \\
	\fontsize{8}{8}{$(c)$}
	\vspace*{-0.3cm}
\end{minipage}
\hfil
\begin{minipage}{0.24\textwidth}
	\centering
	\includegraphics[trim=0.8cm 0cm 1.4cm 0.6cm, clip=true, width=1.74in]{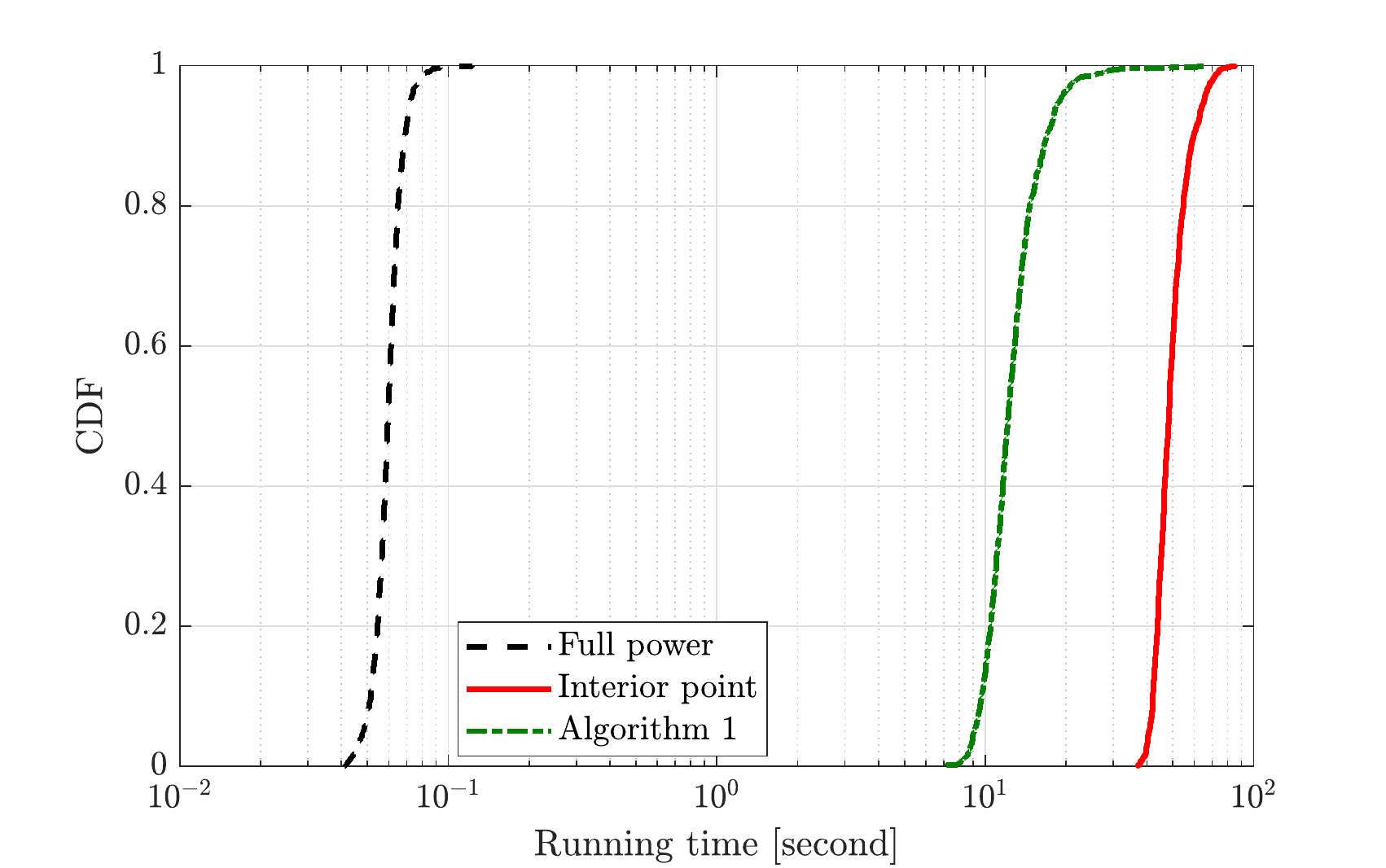}  \vspace*{-0.5cm} \\
	\fontsize{8}{8}{$(d)$}
	\vspace*{-0.2cm}
\end{minipage}
\caption{The system performance: $(a)$ CDF of the sum ergodic data throughput [Mbps] using Monte Carlo simulations and the analytical frameworks; $(b)$ CDF of the minimum data throughput per user [Mbps] using Monte Carlo simulations vs the analytical frameworks; $(c)$ CDF of the minimum data throughput per user [Mbps] for the space-terrestrial communication system; $(d)$ CDF of the running time to obtain the max-min fairness solution for the space-terrestrial communication system.}  \label{Fig2}
\vspace*{-0.6cm}
\end{figure*}
\vspace*{-0.2cm}
\section{Numerical Results} \label{Sec:NumericalResults}
\vspace*{-0.2cm}
We consider a network's deployment in a rural area with $40$ BSs distributed in a square area of $20$~km$^2$, mapped into a Cartesian coordinate system $(x,y,z)$. A LEO satellite is equipped with $N=100$, ($N_H = N_V = 10$), antennas and it is located at the position $(300, 300, 400)$~km. 
The antenna gain at the terrestrial devices is $5.0$~dBi and it is $26.9$~dBi at the satellite. The system bandwidth is $B=100$~MHz and the carrier frequency is $f_c = 20$~GHz. The coherence block is $\tau_c = 5000$ OFDM subcarriers. The transmit power assigned to each data symbol is $5$~dBW \cite{abdelsadek2021future}. The noise figure at the BSs and satellite are $7$~dB and $1.2$~dB, respectively.  The large-scale fading coefficient between user~$k$ and BS~$m$ is  suggested by the 3GPP model \cite{LTE2017a}, as $\beta_{mk} = G_m + G_k  - 8.50 -  20\log_{10}(f_c) -  38.63 \log_{10} (d_{mk})  + \zeta_{mk}, $
where $G_m$ and $G_k$ are the antenna gains at AP~$m$ and user~$k$, respectively. The distance between this user and AP~$m$ is denoted as $d_{mk}$ and $\zeta_{mk}$ is the shadow fading following a log-normal distribution with standard derivation $8$~dB. The large-scale fading coefficient between user~$k$ and the satellite is defined  \cite{3gpp2019study} as $
	\beta_{k} = G + G_k + \tilde{G}_k -32.45 - 20 \log_{10} (f_c) - 20 \log_{10} (d_k) + \zeta_k, $
where $G$ is the RA gain at the satellite and its normalized beam pattern is $\tilde{G}_k = 4 \left|J_1 \left( \frac{2\pi}{\lambda} \alpha \sin(\phi_k) \right)/\left(\frac{2\pi}{\lambda} \alpha \sin(\phi_k) \right) \right|^2$ if $ 0 \leq \phi_k \leq \pi/2$. Otherwise, $\tilde{G}_k =0$ if $\phi_k = 0$. Here, $\alpha$ denotes the radius of the antenna's circular aperture; $\lambda$ is the wavelength; and $\phi_k$ is the angle between user~$k$ and its beam center. The shadow fading $\zeta_k$ is obtained from a log-normal distribution with the standard derivation depending on the carrier frequency, channel condition, and the elevation angle \cite{3gpp2019study}. $d_k$~[m] is the distance between the satellite and user~$k$.
We consider $1000$ different time slots, each consisting of $20$ users uniformly located in the coverage area.  The three different systems are considered: $i)$ \textit{the space-terrestrial system} represented by the SINR expression in \eqref{eq:SINRk} with the overall channel coefficient $z_{kk'}$ in \eqref{eq:zkkprime} and the analytical framework in \eqref{eq:ClosedSINR}; $ii)$ \textit{the stand-alone terrestrial system} represented by the SINR expression in \eqref{eq:SINRk} with $z_{kk'}=   \sum\nolimits_{m=1}^M  u_{mk}^\ast  g_{mk'} $; and $iii)$ \textit{the stand-alone space  system} represented by the SINR expression in \eqref{eq:SINRk} with $z_{kk'}=  \mathbf{u}_k^H \mathbf{g}_{k'} $.

Fig.~\ref{Fig2}$(a)$ compares the cumulative distribution function (CDF) of the sum data throughput. The numerical simulations and the analytical results match very well for the different systems that validates our analysis. A network only relying on the satellite offers quite stable throughput on average. The terrestrial system offers $2.3$~times better the sum throughput than the baseline. Jointly processing the received signals, the space-terrestrial system supports superior improvements of $30\%$ in throughput over utilizing the APs only. Fig.~\ref{Fig2}$(b)$ shows the CDF of the minimum throughput where the terrestrial system is the baseline. The satellite system yields $14$ times higher than the baseline. Superior gains up to  $28.8$ times better than only using the APs are obtained by integrating the satellite into a terrestrial network.

Fig.~\ref{Fig2}$(c)$ plots the CDF of the throughput of the different power allocation strategies. The full data power transmission produces the lowest average max-min fairness level, which is $1.3$~[Mbps]. The two remaining algorithms generate the same solution that is $3$ times better than the full power transmission on average.  Indeed, the interior point methods have been widely applied for solving the max-min fairness optimization problem \cite{ngo2017cell}. The associated running time required to obtain the max-min fairness solution is given in Fig.~\ref{Fig2}$(d)$. The running time is only $0.06$~[s] dedicated to estimating the data throughput if the system allows each user transmit at full power per data symbol. The interior-point methods spend $50$~[s] to obtain the solution of  Problem~\eqref{Problem:MaxMinQoS}, while Algorithm~\ref{Algorithm1} provides a rich time reduction factor of $3.8$   times.

\vspace*{-0.2cm}
\section{Conclusions} \label{Sec:Concl}
\vspace*{-0.1cm}
The throughput analysis and the max-min data power control of a multi-user system were provided in the presence of an NGSO satellite and distributed APs. We assumed a centralized signal processing unit for boosting the throughput per user for coherent data detection combining the received signals of the space and terrestrial links. The achievable data throughput expression derived can be applied to an arbitrary channel model and combining techniques. A closed-form expression was also derived for the MRC receiver technique and spatially correlated  channels. 
The satellite boosts the sum throughput in the network by more than $30\%$ for the  parameter setting considered, while the minimum  throughput is enhanced by more than tenfold. 
\bibliographystyle{IEEEtran}
\bibliography{IEEEabrv,refs}
\end{document}